\definecolor{webgreen}{rgb}{0,.5,0}
\definecolor{webbrown}{rgb}{.6,0,0}
\newcommand{\seqnum}[1]{\href{https://oeis.org/#1}{\rm \underline{#1}}}
\DeclareMathOperator{\factoreq}{factoreq}
\DeclareMathOperator{\spec}{spec}
\DeclareMathOperator{\speclen}{speclen}
\DeclareMathOperator{\maxspec}{maxspec}
\def\andd{\, \wedge \, }
\begin{document}

\theoremstyle{plain}
\newtheorem{theorem}{Theorem}
\newtheorem{corollary}[theorem]{Corollary}
\newtheorem{lemma}[theorem]{Lemma}
\newtheorem{proposition}[theorem]{Proposition}

\theoremstyle{definition}
\newtheorem{definition}[theorem]{Definition}
\newtheorem{example}[theorem]{Example}
\newtheorem{conjecture}[theorem]{Conjecture}

\theoremstyle{remark}
\newtheorem{remark}[theorem]{Remark}

\title{Note on a Fibonacci Parity Sequence}

\author{Jeffrey Shallit\\
School of Computer Science\\
University of Waterloo\\
Waterloo, ON  N2L 3G1\\
Canada\\
\href{mailto:shallit@uwaterloo.ca}{\tt shallit@uwaterloo.ca}}

\maketitle

\begin{abstract}
Let ${\bf ftm} = {\tt 0111010010001}\cdots$ be the analogue of
the Thue-Morse sequence in Fibonacci representation.  In this
note we show how,
using the {\tt Walnut} theorem-prover,
to obtain a measure of its complexity,
previously studied by Jamet, Popoli, and Stoll.
We strengthen one of their theorems and disprove one of their
conjectures.
\end{abstract}

\section{Introduction}

Recall that the Fibonacci numbers $(F_n)$ are defined by
$$ F_0 = 0, \quad F_1 = 1, \quad \text{ and } \quad  F_n = F_{n-1} + F_{n-2}  \quad \text{for $n \geq 2$} .$$
In Fibonacci representation (aka Zeckendorf representation),
we express a natural number
$n$ uniquely as a sum of non-adjacent Fibonacci numbers:
$n = \sum_{2 \leq i \leq t} e_i F_i$, where
$e_i \in \{0,1\}$  and $e_i e_{i+1} = 0$ for $2 \leq i < t$.
See, for example, \cite{Lekkerkerker:1952,Zeckendorf:1972}.

The sum of the Fibonacci bits is $s_F (n) = \sum_{2 \leq i \leq t} e_i$,
and the so-called Fibonacci-Thue-Morse sequence $\bf ftm$ is
then defined as ${\bf ftm}[n] = s_F(n) \bmod 2$.   Here are the
first few terms of this binary sequence, which is sequence
\seqnum{A095076} in the {\it On-Line Encyclopedia of Integer
Sequences} (OEIS) \cite{Sloane}:
\begin{center}
\begin{tabular}{c|cccccccccccccccc}
$n$ & 0 & 1 & 2 & 3 & 4 & 5 & 6 & 7 & 8 & 9 & 10 & 11 & 12 & 13 \\
\hline
${\bf ftm}[n]$ & {\tt 0} & {\tt 1} & {\tt 1} & {\tt 1} & {\tt 0} & {\tt 1} & {\tt 0} & {\tt 0} & {\tt 1} & {\tt 0} & {\tt 0} & {\tt 0} & {\tt 1} & {\tt 1}  
\end{tabular}
\end{center}
This sequence was previously studied by Ferrand \cite{Ferrand:2007}
and the author \cite{Shallit:2021}, to name just two appearances.

Recently $\bf ftm$ appeared in a paper of Jamet, Popoli, and Stoll
\cite{Jamet&Popoli&Stoll:2021}, as follows.   A factor (contiguous
block) $x$ in a 
(finite or infinite) binary sequence $s$ is said to be {\it special\/}
if both $x{\tt 0}$ and $x{\tt 1}$ appear in $s$.   
Now let $\bf s$ be an infinite sequence, and define
$f_{\bf s} (n)$ be the length of the longest special factor
in a length-$n$ prefix of $\bf s$.

For example, here are the first few terms of $f_{\bf s}(n)$
when ${\bf s} = {\bf ftm}$:
\begin{center}
\begin{tabular}{c|ccccccccccccccccccccc}
$n$ &  2 & 3 & 4 & 5 & 6 & 7 & 8 & 9 & 10 & 11 & 12 & 13 &14 & 15 & 16 & 17 & 18 & 19\\
\hline
$f_{\bf ftm}(n)$ & 0& 0& 0& 2& 2& 2& 2& 2& 2& 2& 4& 4& 4& 4& 4& 4& 4& 5
\end{tabular}
\end{center}

Jamet et al.\ studied a certain
measure of the complexity of a binary sequence $\bf s$, called
maximum order complexity $M_{\bf s}(n)$, and observed that by a result of 
Jansen \cite{Jansen:1989}, we have the relationship
$M_{\bf s}(n) = f_{\bf s}(n)+1$.

Letting $\alpha = (1+\sqrt{5})/2$ be the golden ratio, their
Theorem 1.1 is the following:
\begin{theorem}
There exists $N_0$ such that for all $n > N_0$ we have
$M_{\bf ftm} (n) \geq {n \over {\alpha+\alpha^3}} + 1$.
\label{thm1}
\end{theorem}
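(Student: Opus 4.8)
The plan is to reduce Theorem~\ref{thm1} to an assertion about the special-factor function $f_{\bf ftm}$ and then to settle that assertion with the \texttt{Walnut} theorem-prover, exploiting that $\bf ftm$ is Fibonacci-automatic. By Jansen's relation $M_{\bf ftm}(n)=f_{\bf ftm}(n)+1$ quoted above, the statement is equivalent to
\[ f_{\bf ftm}(n) \ \geq\ \frac{n}{\alpha+\alpha^3} \qquad (n>N_0). \]
Since $\alpha^2=\alpha+1$ gives $\alpha^3=2\alpha+1$, the denominator simplifies to $\alpha+\alpha^3=3\alpha+1=\alpha^4-1$; I record this because powers of $\alpha$ correspond to shifts of Zeckendorf representations, which is exactly what will make the irrational constant accessible to \texttt{Walnut}.

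First I would encode $f_{\bf ftm}$ as a Fibonacci-automatic relation. Since $\bf ftm$ is computed by a finite automaton reading the Zeckendorf representation of $n$, predicates about it are decidable in the Fibonacci numeration system. A length-$\ell$ factor occurring at position $i$ is special inside the length-$n$ prefix exactly when there is a second occurrence, at some position $j$, with differing continuations; writing this out gives
\[ \mathrm{sp}(n,\ell):=\exists i,\,j\ \bigl(i+\ell<n\bigr)\wedge\bigl(j+\ell<n\bigr)\wedge\bigl(\forall t<\ell\ \ {\bf ftm}[i+t]={\bf ftm}[j+t]\bigr)\wedge\bigl({\bf ftm}[i+\ell]\neq{\bf ftm}[j+\ell]\bigr), \]
where the final inequality forces one continuation to be $\tt 0$ and the other $\tt 1$. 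Then $f_{\bf ftm}(n)=\max\{\ell:\mathrm{sp}(n,\ell)\}$ is captured by $\mathrm{sp}(n,\ell)\wedge\forall m\,(\mathrm{sp}(n,m)\Rightarrow m\le\ell)$, and \texttt{Walnut} returns a deterministic automaton recognizing $\{(n,\ell)_F:\ell=f_{\bf ftm}(n)\}$. This shows $f_{\bf ftm}$ is Fibonacci-synchronized, the structural input for everything that follows.

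Next I would turn the inequality into a \texttt{Walnut}-checkable sentence. Because $f_{\bf ftm}$ is nondecreasing while $n/(3\alpha+1)$ is strictly increasing, the bound can only fail just before $f_{\bf ftm}$ increases; hence it suffices to verify $n\le(3\alpha+1)\,f_{\bf ftm}(n)$ at every $n$ with $f_{\bf ftm}(n+1)>f_{\bf ftm}(n)$. To express the right-hand side inside \texttt{Walnut}, which manipulates only integers and automata, I would build a synchronized automaton for the integer function $m\mapsto\lfloor(3\alpha+1)m\rfloor$: since $3\alpha+1=\alpha^4-1$ and multiplication by a power of $\alpha$ is realized, up to a bounded error, by shifting Zeckendorf representations, composing with the standard Fibonacci ``multiply-by-$\alpha$'' synchronized relation yields such an automaton. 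The desired bound then becomes the single first-order sentence
\[ \forall n\ \bigl(f_{\bf ftm}(n+1)>f_{\bf ftm}(n)\ \Rightarrow\ n\le\lfloor(3\alpha+1)\,f_{\bf ftm}(n)\rfloor\bigr) \]
restricted to $n>N_0$, which \texttt{Walnut} decides; the finitely many values $n\le N_0$ are inspected directly against the table.

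The main obstacle is the irrational slope. Because \texttt{Walnut} sees only integers, the golden-ratio constant $3\alpha+1$ must be reincarnated as a bona fide Fibonacci-synchronized integer function, and the $O(1)$ rounding error incurred when $\lfloor(3\alpha+1)m\rfloor$ is assembled from shifts has to be pinned down exactly: too small an additive constant makes the automaton reject, too large a one weakens the theorem. The delicate verification is therefore that, with the correct rounding, the endpoint inequality holds for \emph{every} large $n$ and not merely in the limit; the worst-case ratio $n/f_{\bf ftm}(n)$ over the block endpoints tending to $3\alpha+1$ confirms the constant is sharp, while the automaton check guarantees that no block overshoots. Once this synchronized inequality is verified it reinstates $f_{\bf ftm}(n)\ge n/(3\alpha+1)$ for all $n>N_0$, and adding $1$ via $M_{\bf ftm}=f_{\bf ftm}+1$ recovers Theorem~\ref{thm1}.
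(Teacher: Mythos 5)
Your front end---the reduction via Jansen's relation $M_{\bf ftm}=f_{\bf ftm}+1$ and the Walnut construction of a synchronized automaton for $(n,f_{\bf ftm}(n))$---is exactly the paper's Sections 2 and 3. Where you genuinely diverge is the endgame. The paper never encodes the irrational constant inside Walnut: it first extracts from the synchronized automaton an exact piecewise formula (Theorem~\ref{main}: for $L_i<n\le L_{i+1}$, $f_{\bf ftm}(n)=F_{i-1}$ for $i$ even and $F_{i-1}+1$ for $i$ odd, the $L_i$ being Lucas numbers), verified by a Walnut query parametrized by pairs of consecutive Fibonacci numbers, and then finishes \emph{by hand}: since $f_{\bf ftm}$ is constant on each Lucas interval, the inequality reduces to $F_{i-1}/L_{i+1}\ge 1/(\alpha+\alpha^3)$ (resp.\ $(F_{i-1}+1)/L_{i+1}\ge 1/(\alpha+\alpha^3)$), which is a short Binet-form estimate with $\beta=(1-\sqrt5)/2$. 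That route yields the explicit threshold $n\ge 5$ and, as a by-product, the liminf/limsup corollary that refutes Conjecture~\ref{conj2}. Your route instead keeps everything automaton-theoretic by building a Fibonacci-synchronized version of $m\mapsto\lfloor(3\alpha+1)m\rfloor$; this is a legitimate alternative in principle, because $k\mapsto\lfloor\alpha k\rfloor$ is indeed known to be Fibonacci-synchronized (see the ``Synchronized sequences'' reference \cite{Shallit:2021b} that the paper cites).

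However, you leave your own flagged crux unresolved, and as described it would fail. Realizing $\alpha^4$ by ``composing with the standard multiply-by-$\alpha$ synchronized relation'' means iterating floors, and $\lfloor\alpha\lfloor\alpha\lfloor\alpha\lfloor\alpha m\rfloor\rfloor\rfloor\rfloor$ is not $\lfloor\alpha^4m\rfloor$ in general; likewise a single Zeckendorf left shift differs from $\lfloor\alpha m\rfloor$ by a $0/1$ correction depending on $m$. This $O(1)$ error is not cosmetic here: at the block endpoints $n=L_{i+1}$ with $i$ even, the slack $f_{\bf ftm}(n)-n/(\alpha+\alpha^3)=|\beta|^{i-1}(1+\beta^2/\alpha^2)/\sqrt5$ tends to $0$, so an under-approximation by even $1$ makes the Walnut sentence false (although the theorem is true), while an over-approximation proves a weaker statement than the theorem. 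You say the rounding ``has to be pinned down exactly'' but never pin it down. The clean repair is to avoid iterated floors entirely: $3m$ is computed exactly by Fibonacci addition, and for $m\ge 1$ the number $3\alpha m$ is irrational, so for integers $n,m$ one has
\begin{equation*}
n\le(3\alpha+1)m \iff n-m\le\lfloor\alpha\,(3m)\rfloor ,
\end{equation*}
a single application of the synchronized floor-of-$\alpha$-times relation, with no error term at all. With that sentence in place of yours, Walnut decides the statement and your argument goes through (your block-endpoint reduction via monotonicity of $f_{\bf ftm}$ is then unnecessary, since Walnut can quantify over all $n$ directly and will also report the exact threshold $N_0$).
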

They also gave the following conjecture, Conjecture 3.1 in their paper:
\begin{conjecture}
$M_{\bf ftm} (n) \sim {n \over {1 + \alpha^2}}$.
\label{conj2}
\end{conjecture}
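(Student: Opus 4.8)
The plan is to prove the statement in the equivalent form $f_{\bf ftm}(n) \sim n/(1+\alpha^2)$; since $M_{\bf ftm}(n) = f_{\bf ftm}(n)+1$ by Jansen's result, the additive constant is negligible and the two asymptotics are interchangeable. The underlying idea is that $f_{\bf ftm}$ is essentially the inverse of the function $g(\ell)$ recording the least prefix length in which some special factor of length $\ell$ is completed (that is, in which both of its one-letter extensions occur): since $f_{\bf ftm}(n) = \max\{\ell : g(\ell) \leq n\}$ is non-decreasing, showing $g(\ell) \sim (1+\alpha^2)\,\ell$ would, upon inverting, give $f_{\bf ftm}(n) \sim n/(1+\alpha^2)$, reducing the whole problem to pinning down the growth rate of a single Fibonacci-automatic function.

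First I would set up the relevant predicates in {\tt Walnut}. Because $\bf ftm$ is Fibonacci-automatic, the factor-equality relation $\factoreq(i,j,\ell)$, asserting ${\bf ftm}[i+t] = {\bf ftm}[j+t]$ for all $t < \ell$, is first-order definable, and from it the predicate
$$ P(n,\ell) \ := \ \exists i, j \ (i+\ell < n) \andd (j+\ell < n) \andd \factoreq(i,j,\ell) \andd ({\bf ftm}[i+\ell] \ne {\bf ftm}[j+\ell]) $$
expresses that a special factor of length $\ell$ is completed inside the length-$n$ prefix. Then $f_{\bf ftm}(n) \geq \ell$ holds exactly when $P(n,\ell)$ holds, and {\tt Walnut} produces the automaton for $P$. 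The first real step is to verify that $f_{\bf ftm}$ is Fibonacci-\emph{synchronized}: one checks mechanically that $\{(n,\ell) : P(n,\ell) \andd \neg P(n,\ell+1)\}$ is the graph of a function, yielding a finite automaton that reads the Fibonacci representations of $n$ and $f_{\bf ftm}(n)$ in parallel.

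With a synchronized automaton in hand, the asymptotics of $f_{\bf ftm}(n)/n$ are governed by the cycles of that automaton, and $\limsup$ and $\liminf$ of the ratio are computable and lie in $\mathbb{Q}(\alpha)$. The goal is then to show that every extremal cycle yields the same slope $1/(1+\alpha^2)$. For the upper bound this amounts to proving $g(\ell) \geq (1+\alpha^2)\ell\,(1-o(1))$, i.e.\ that completing a long special factor forces a proportionally long prefix. For the lower bound I must \emph{strengthen} Theorem~\ref{thm1}: its slope is exactly $1/(\alpha+\alpha^3) = 1/(\alpha(1+\alpha^2))$, a factor of $\alpha \approx 1.618$ below the conjectured limit, so the family of special factors used to prove Theorem~\ref{thm1} must be replaced by a denser family that completes special factors of length $\ell$ already within prefixes of length $(1+\alpha^2)\ell\,(1+o(1))$.

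I expect the decisive obstacle to be precisely this factor-of-$\alpha$ gap, i.e.\ proving that the limit exists at all. A synchronized function need not have a genuine limit for $f(n)/n$; its ratio can oscillate among the several algebraic values attached to distinct cycles. The gap between the known lower slope $1/(\alpha(1+\alpha^2))$ and the target $1/(1+\alpha^2)$ is a warning that the extremal cycles may \emph{not} coincide, in which case $\liminf f_{\bf ftm}(n)/n < \limsup f_{\bf ftm}(n)/n$ and the conjecture fails as stated, leaving only a two-sided inequality. The heart of the proof is therefore the cycle analysis of the synchronized automaton: one must enumerate the finitely many extremal cycles and check whether they all evaluate to $1/(1+\alpha^2)$. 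If they do, the matching bounds above combine to give $M_{\bf ftm}(n) \sim n/(1+\alpha^2)$; if even one of them deviates, convergence is impossible and the statement is refuted rather than established.
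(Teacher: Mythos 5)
The statement you were asked to prove is false: the paper \emph{disproves} Conjecture~\ref{conj2}, so no completion of your plan can succeed as a proof, and to your credit your proposal explicitly anticipates this outcome. Your framework --- express $f_{\bf ftm}(n)\geq\ell$ by a first-order predicate over the Fibonacci-automatic sequence $\bf ftm$, obtain a Fibonacci-synchronized automaton for $n\mapsto f_{\bf ftm}(n)$ with {\tt Walnut}, and extract $\liminf$ and $\limsup$ of $f_{\bf ftm}(n)/n$ (both computable and lying in $\mathbb{Q}(\alpha)$) from the automaton --- is essentially the paper's method. The genuine gap is that your write-up stops at a conditional: ``if all extremal cycles give $1/(1+\alpha^2)$ the conjecture holds; if one deviates it is refuted.'' You never carry out the decisive computation, so as a proof attempt it leaves unresolved precisely the step that determines the truth value, and you also misjudge which branch is plausible: you treat the factor-of-$\alpha$ gap below Theorem~\ref{thm1} as something to be closed by a ``denser family'' of special factors, when in fact that bound is asymptotically tight along a subsequence.

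When the computation is performed (the paper's Theorem~\ref{main}), one finds that $f_{\bf ftm}$ is constant on the Lucas intervals $L_i < n \leq L_{i+1}$ (for $i \geq 4$, $n \geq 8$), equal to $F_{i-1}$ for $i$ even and $F_{i-1}+1$ for $i$ odd. The ratio $f_{\bf ftm}(n)/n$ therefore oscillates: it is maximized at $n = L_i+1$, where it tends to $1/(\sqrt{5}\,\alpha) = 1/(1+\alpha^2) \doteq 0.2764$, and minimized at $n = L_{i+1}$, where it tends to $1/(\sqrt{5}\,\alpha^2) = 1/(\alpha+\alpha^3) \doteq 0.1708$. Thus $\liminf_{n\to\infty} f_{\bf ftm}(n)/n = 1/(\alpha+\alpha^3) < 1/(1+\alpha^2) = \limsup_{n\to\infty} f_{\bf ftm}(n)/n$, the limit does not exist, and the conjectured constant is only the $\limsup$; consequently $M_{\bf ftm}(n) \sim n/(1+\alpha^2)$ fails. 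In particular, since the slope $1/(\alpha+\alpha^3)$ is attained in the limit along $n = L_{i+1}$, Theorem~\ref{thm1} cannot be improved in the way your lower-bound step requires. Had you executed your own cycle analysis rather than leaving it as a fork, you would have landed on the paper's refutation.
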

In this note we first obtain an exact formula for
$f_{\bf ftm} (n)$ (our Theorem~\ref{main}), using the
{\tt Walnut} theorem-prover.  
(For more about {\tt Walnut}, see \cite{Mousavi:2016} and \cite{Shallit:2022}.)
Then, using this result, we disprove Conjecture~\ref{conj2}.
Finally,
we show how to reprove Theorem~\ref{thm1} and also determine
the explicit value of $N_0$ in that theorem.

\section{First-order formulas}
\label{two}

The idea behind our approach is that we can express the assertion that
$m = f_{\bf s}(n)$ as a formula in first-order logic.
This is done as follows:
\begin{itemize}
\item $\factoreq(i,j,n)$ asserts that ${\bf s}[i..i+n-1] = {\bf s}[j..j+n-1]$;
\item $\spec(i,m,n)$ asserts that there is a $j$ such that
$\factoreq(i,j,m)$ but ${\bf s}[i+m] \not= {\bf s}[j+m]$, and 
$\max(i+m,j+m) <n$;
\item $\speclen(m,n)$ asserts that there is a special factor of
length $m$ in ${\bf s}[0..n-1]$, the length-$n$ prefix of $\bf s$;
\item $\maxspec(m,n)$ asserts that the length of the longest special factor occurring
in a prefix of length $n$ is $m$.
\end{itemize}

We can translate these into first-order formulas as follows:
\begin{align*}
\factoreq(i,j,n) &= \forall u, v \ (i+v=j+u \andd u\geq i \andd u<i+n) \implies
{\bf s}[u] = {\bf s}[v] \\
\spec(i,m,n) &= \exists j\ \factoreq(i,j,m) \andd {\bf s}[i+m] \not= {\bf s}[j+m]
\andd i+m<n \andd j+m < n \\
\speclen(m,n) &= \exists i \ \spec(i,m,n) \\
\maxspec(m,n) &= \speclen(m,n) \andd \forall j\ (j>m) \implies \neg \speclen(j,n).
\end{align*}

It now follows from known results
\cite{Charlier&Rampersad&Shallit:2012}
that if $\bf s$ is a (generalized)
automatic sequence, then there is an effective algorithm for computing
an automaton that accepts the representation of those pairs of
natural numbers $m,n$ for which $m$ is the length of the longest
special factor in ${\bf s}[0..n-1]$.  In particular, the sequence
${\bf ftm}$ is Fibonacci-automatic, so we can find this automaton
for the pairs $(m,n)$ in Fibonacci representation \cite{Mousavi&Schaeffer&Shallit:2016}.  This is done
in the next section.

\section{Translation to {\tt Walnut}}

We now translate the first-order formulas to {\tt Walnut}.  The first
step is to obtain an automaton for the sequence $\bf ftm$ itself.
We can either get this directly from the morphism and coding
in the paper of Jamet et al., or compute it with {\tt Walnut}.  For
this latter approach we use the {\tt Walnut} commands
\begin{verbatim}
reg odd1 msd_fib "0*(10*10*)*10*":
def zecksum "?msd_fib (n>=0) & $odd1(n)":
combine FTM zecksum:
\end{verbatim}
The first line defines a regular expression for a binary string having
an odd number of $\tt 1$'s; the second asserts that an integer $n$ has
an odd number of $\tt 1$'s in its Fibonacci representation; and the third line
turns this automaton into an automaton with output (DFAO).

Next, we translate the first-order formulas from Section~\ref{two}
into {\tt Walnut}:
\begin{verbatim}
def factoreq "?msd_fib Au,v (i+v=j+u & u>=i & u<i+n) => FTM[u]=FTM[v]":
def spec "?msd_fib (i+m<n) & Ej (j+m<n) & $factoreq(i,j,m) &
   FTM[i+m] != FTM[j+m]":
def speclen "?msd_fib Ei $spec(i,m,n)":
def maxspec "?msd_fib $speclen(m,n) & Aj (j>m) => ~$speclen(j,n)":
\end{verbatim}
By comparing these with the formulas above, we see that they are
more-or-less a direct translation.

When we run all these commands through {\tt Walnut}, the last one
produces an automaton of 17 states depicted in Figure~\ref{ms}.
\begin{figure}[htb]
\begin{center}
\includegraphics[width=6.7in]{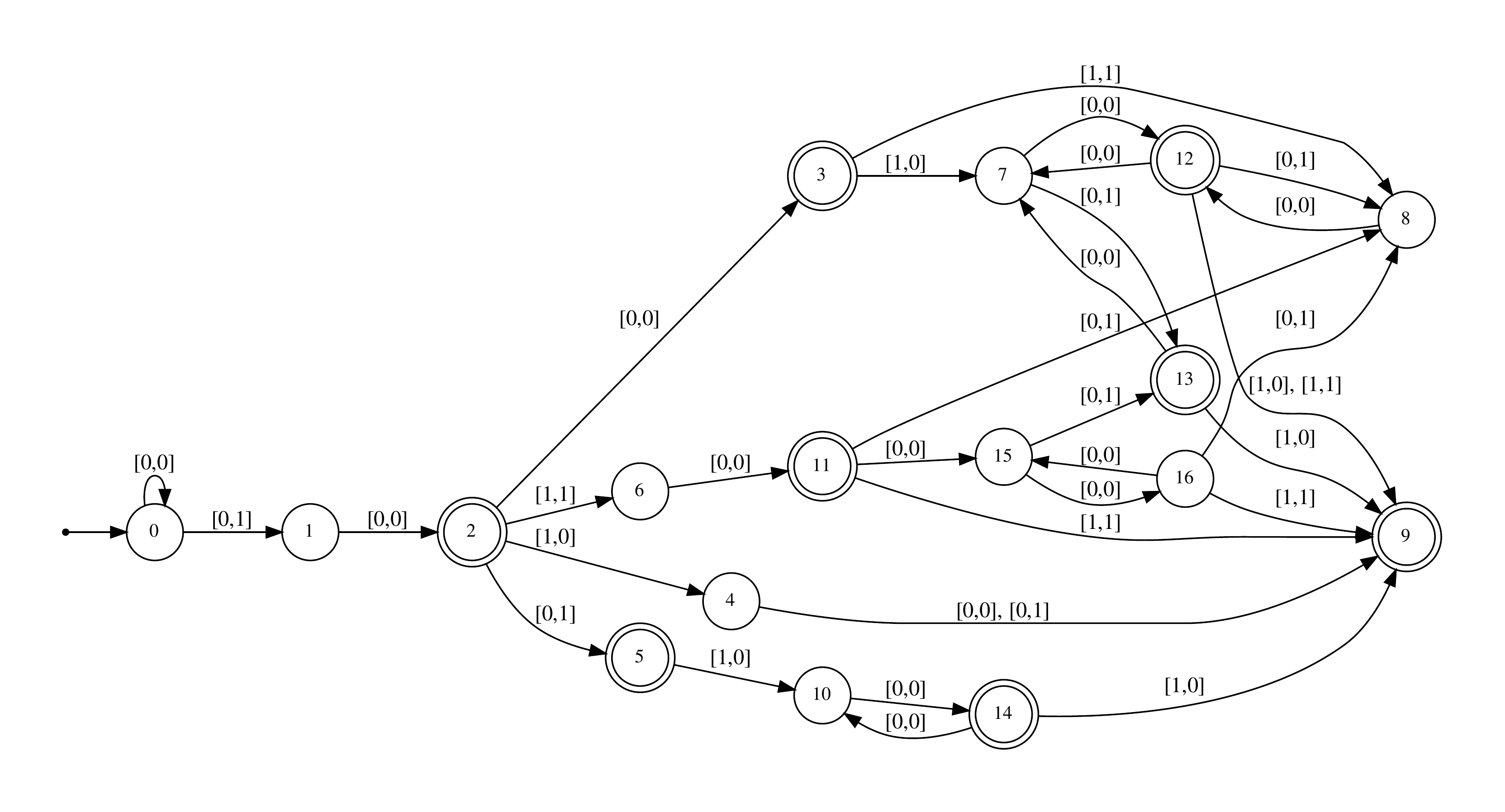}
\end{center}
\caption{Automaton accepting Fibonacci representation of
$(\maxspec(n), n)$.}
\label{ms}
\end{figure}
This shows that the map $n \rightarrow f_{\bf ftm} (n)$ is
``Fibonacci-synchronized''; see, for example, \cite{Shallit:2021b}.

We can use this automaton to prove the following ``exact'' formula
for $f_{\bf ftm}(n)$.   Define the Lucas numbers $(L_n)$ by
$L_0 = 2$, $L_1 = 1$, and $L_n = L_{n-1} + L_{n-2}$ for $n \geq 2$.
Then we have the following result:
\begin{theorem}
Suppose $i \geq 4$ and $n \geq 8$. If
$L_i < n \leq L_{i+1}$, then
$$f_{\bf ftm} (n) = \begin{cases}
	F_{i-1}, & \text{for $i$ even;} \\
	F_{i-1}+1, & \text{for $i$ odd.}
	\end{cases}
$$
\label{main}
\end{theorem}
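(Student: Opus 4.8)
The plan is to exploit the fact, noted just above, that $n \mapsto f_{\bf ftm}(n)$ is Fibonacci-synchronized: the automaton of Figure~\ref{ms} accepts exactly the Fibonacci representations of pairs $(m,n)$ with $m = f_{\bf ftm}(n)$, that is, with $\maxspec(m,n)$. Consequently it suffices to build a second Fibonacci-synchronized predicate $\text{rhs}(m,n)$ recognizing the pairs $(m,n)$ for which $m$ equals the right-hand side of the claimed formula, and then let {\tt Walnut} verify that $\maxspec(m,n)$ and $\text{rhs}(m,n)$ define the same relation on the range $n \geq 8$. Because equality of synchronized relations is decidable, this reduces the entire theorem to a single quantified {\tt Walnut} query that returns TRUE.

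The real work is in expressing $\text{rhs}$ in first-order Fibonacci arithmetic, and the key structural observation is that every quantity in the statement has a transparent Zeckendorf form. Since $L_i = F_{i+1} + F_{i-1}$ with the positions $i+1$ and $i-1$ non-adjacent, the Fibonacci representation of $L_i$ is the regular pattern $101\,0^{\,i-3}$; hence ``$\ell$ is a Lucas number'' (for $\ell \geq L_3 = 4$) is captured by a regular expression, exactly as $\bf ftm$ itself was obtained. I would first define $\text{lucas}(\ell)$ this way, and then isolate the relevant index by requiring $\ell$ to be the largest Lucas number strictly below $n$; a short check shows this is equivalent to $L_i < n \leq L_{i+1}$, so $L_{i+1}$ need not be named separately. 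The value $F_{i-1}$ is simply the Fibonacci number sitting in the low position of the representation $101\,0^{\,i-3}$ of $\ell = L_i$, a fixed transformation of the Zeckendorf word that is immediate to encode as a synchronized relation; and the parity of $i$ is read off from the parity of the position $i-1$ of that low bit, which is a regular condition. Combining these, $\text{rhs}(m,n)$ sets $m = F_{i-1}$ in the even case and $m = F_{i-1}+1$ in the odd case.

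With $\text{rhs}$ in hand, the final step is the {\tt Walnut} assertion $\forall m, n\ \bigl( n \geq 8 \implies (\maxspec(m,n) \Leftrightarrow \text{rhs}(m,n)) \bigr)$, whose truth establishes the theorem. I expect the main obstacle to lie not in the logic or the decision procedure but in the faithful translation itself: one must ensure that the regular expression for Lucas numbers, the extraction of $F_{i-1}$ from $L_i$, and the parity test all agree with {\tt Walnut}'s conventions for representing pairs (padding and leading zeros), and that the small indices---particularly $i = 4$, where the Zeckendorf patterns are shortest and the adjacency constraints tightest---behave correctly. This is exactly why the hypotheses $i \geq 4$ and $n \geq 8$ appear: a few short prefixes are exceptional, and restricting to $n \geq 8$ is precisely what makes the universal statement come out true. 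As an independent cross-check I would also verify with {\tt Walnut} that $f_{\bf ftm}$ is nondecreasing and increases only as $n$ passes a Lucas number, confirming the step-function shape that underlies the closed form.
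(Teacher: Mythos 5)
Your proposal is sound and, like the paper, it ultimately rests on reducing the theorem to a decidable {\tt Walnut} query involving the synchronized predicate $\maxspec$; where it genuinely differs is in how the hypothesis $L_i < n \le L_{i+1}$ and the value $F_{i-1}$ are encoded. The paper never names Lucas numbers inside {\tt Walnut} at all: it parameterizes by the consecutive Fibonacci pair $(x,y)=(F_{i-1},F_i)$, recognized by the trailing-zero-parity regular expressions {\tt isoddfib}, {\tt isevenfib} together with the inequalities $x<y\le 2x$ (which force consecutiveness), and then uses the identities $L_i = 2F_{i-1}+F_i$ and $L_{i+1}=F_{i-1}+3F_i$, so that the interval condition becomes the purely arithmetic constraint $2x+y<n\le x+3y$ and the conclusion becomes $\maxspec(x,n)$ (resp.\ $\maxspec(x+1,n)$); a one-directional implication suffices there because $\maxspec$ is functional in $n$. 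You instead recognize $\ell=L_i$ directly by its Zeckendorf pattern $101\,0^{i-3}$ (i.e., $L_i=F_{i+1}+F_{i-1}$), pin down $i$ by requiring $\ell$ to be the largest Lucas number below $n$, extract $F_{i-1}$ by a synchronized pair-language of the form $(1,0)(0,0)(1,1)(0,0)^*$, read the parity of $i$ off the trailing zeros, and check a biconditional against $\maxspec$. Both encodings are correct and both final queries would return {\tt TRUE}. The paper's trick buys economy: consecutive-pair recognition plus two linear identities lets {\tt Walnut}'s built-in Fibonacci arithmetic absorb everything, with no custom pair-alphabet regular expression and no ``largest Lucas below $n$'' subformula. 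Your route makes the Lucas structure explicit, and the biconditional is a marginally stronger certificate (it simultaneously verifies that your right-hand-side predicate is total and single-valued for $n\ge 8$), at the price of more hand-built automata --- which is exactly where you correctly flag the risk: padding conventions and the shortest patterns (the case $i=4$) are the places a faithful translation could silently go wrong.
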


\begin{proof}
We can use {\tt Walnut} to verify the statement.  In order to 
determine which case applies,
we need to be able to compute $F_{i-1}, L_i,$ and $L_{i+1}$ simultaneously.
In Fibonacci representation, this can be done with the following
four simple observations:
\begin{itemize}
\item For $i \geq 2$, the number
$F_i$ has Fibonacci representation $1 \overbrace{{\tt 0}\cdots {\tt 0}}^{i-2}$.
\item If $F_i$ and $F_j$ are two Fibonacci numbers
with $F_i < F_j \leq 2F_i$, then $j = i+1$.
\item For $i \geq 1$ we have $L_i = 2F_{i-1} + F_i$.
\item For $i \geq 1$ we have $L_{i+1} = F_{i-1} + 3F_i$.
\end{itemize}

We can assert that $(x,y) = (F_{2i-1}, F_{2i})$ for some $i \geq 2$
by asserting that $(x)_F$ ends in an odd number of $\tt 0$'s,
$(y)_F$ ends in an even number of $\tt 0$'s, and $x<y$ and $2x \geq y$.
Similarly, we can assert that $(x,y) = (F_{2i}, F_{2i+1})$ for some $i \geq 1$
by asserting that $(x)_F$ ends in an even number of $\tt 0$'s,
$(y)_F$ ends in an odd number of $\tt 0$'s, and $x<y$ and $2x \geq y$.
This can be done with the following {\tt Walnut} commands:
\begin{verbatim}
reg isevenfib msd_fib "0*1(00)*":
reg isoddfib msd_fib "0*10(00)*":
def fiboddeven "?msd_fib $isoddfib(x) & $isevenfib(y) & x<y & (2*x)>=y":
def fibevenodd "?msd_fib $isevenfib(x) & $isoddfib(y) & x<y & (2*x)>=y":
\end{verbatim}

We can now check the claim of the theorem as follows:
\begin{verbatim}
def check_i_even "?msd_fib An,x,y ($fiboddeven(x,y) & n>=8 & 2*x+y<n & 
   n<=x+3*y) => $maxspec(x,n)":
def check_i_odd "?msd_fib An,x,y ($fibevenodd(x,y) & n>=8 & 2*x+y<n & 
   n<=x+3*y) => $maxspec(x+1,n)":
\end{verbatim}
and both return {\tt TRUE}.
Here $x$ plays the role of $F_{i-1}$ and $y$ plays the role of $F_i$.
\end{proof}

Because $f_{\bf ftm} (n)$ is constant on longer and longer intervals of $n$,
it becomes easy to compute $\liminf_{n \rightarrow \infty} f_{\bf ftm} (n)/n$
and $\limsup_{n \rightarrow \infty} f_{\bf ftm} (n)/n$.

In particular, if $L_i < n \leq L_{i+1}$ for $i$ even, $i \geq 4$,
Theorem~\ref{main} implies that
the quotient $f_{\bf ftm}(n)/n$ is minimized at
$n = L_{i+1}$, where it takes the value $F_{i-1}/L_{i+1}$,
and maximized at $n = L_i + 1$, where it takes the value
$F_{i-1}/(L_i + 1)$.

Similarly, if $L_i < n \leq L_{i+1}$ for $i$ odd, then
the quotient $f_{\bf ftm} (n)/n$ is minimized at
$n = L_{i+1}$, where it takes the value $(F_{i-1} + 1)/L_{i+1}$,
and is maximized at
$n = L_i + 1$, where it takes the value
$(F_{i-1} + 1)/(L_i + 1)$.

From the above remarks, together with 
the Binet forms for $L_i$ and $F_i$, which are
\begin{align*}
F_i &= (\alpha^i - \beta^i)/\sqrt{5} \\
L_i &= \alpha^i + \beta^i ,
\end{align*}
where $\alpha = (1+\sqrt{5})/2$ and $\beta=(1-\sqrt{5})/2$,
we obtain the following result.
\begin{corollary}
We have 
$$\liminf_{n \rightarrow \infty} f_{\bf ftm} (n)/n =
1/(\sqrt{5} \alpha^2) = 1/(\alpha+\alpha^3) \doteq 0.17082039$$
and
$$\limsup_{n \rightarrow \infty} f_{\bf ftm} (n)/n
= 1/(\sqrt{5} \alpha) = 1/(1+\alpha^2) \doteq 0.276393202.$$
\end{corollary}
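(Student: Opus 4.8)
The plan is to exploit the structure provided by Theorem~\ref{main}: the value $f_{\bf ftm}(n)$ is constant on each block $L_i < n \leq L_{i+1}$. Since the numerator of $f_{\bf ftm}(n)/n$ is a fixed positive integer as $n$ ranges over such a block, while the denominator increases, the quotient is strictly decreasing in $n$ on that block. Consequently, over the block indexed by $i$ the quotient attains its minimum at the right endpoint $n = L_{i+1}$ and its maximum at the left endpoint $n = L_i + 1$; these are precisely the extremal values recorded in the remarks preceding the corollary, namely $F_{i-1}/L_{i+1}$ and $F_{i-1}/(L_i+1)$ in the even case (with an extra $+1$ in the numerator in the odd case).

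Next I would reduce the two-sided limit computations to computing limits of these per-block extrema. For the liminf, note on the one hand that each per-block minimum is itself a value of $f_{\bf ftm}(n)/n$, so $\liminf_n f_{\bf ftm}(n)/n$ is at most the limit of the block minima; on the other hand, any $n$ lies in some block, so $f_{\bf ftm}(n)/n$ is bounded below by that block's minimum, and as $n \to \infty$ the block index grows and that minimum tends to the common limit, giving the reverse inequality. Hence $\liminf_n f_{\bf ftm}(n)/n$ equals the limit of the block minima, and symmetrically $\limsup_n f_{\bf ftm}(n)/n$ equals the limit of the block maxima.

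Then I would evaluate these limits with the Binet forms. For the minima I treat the even case $F_{i-1}/L_{i+1}$ and the odd case $(F_{i-1}+1)/L_{i+1}$ separately, but in both, substituting $F_{i-1} = (\alpha^{i-1} - \beta^{i-1})/\sqrt{5}$ and $L_{i+1} = \alpha^{i+1} + \beta^{i+1}$ and letting $i \to \infty$ (so all $\beta$-powers and the additive constant vanish against the dominant $\alpha$-powers) yields $\alpha^{i-1}/(\sqrt{5}\,\alpha^{i+1}) = 1/(\sqrt{5}\,\alpha^2)$. The identical substitution applied to the maxima $F_{i-1}/(L_i+1)$ and $(F_{i-1}+1)/(L_i+1)$ gives $1/(\sqrt{5}\,\alpha)$. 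Finally I would record the simplifications $\sqrt{5}\,\alpha^2 = \alpha + \alpha^3$ and $\sqrt{5}\,\alpha = 1 + \alpha^2$, both immediate consequences of $\alpha^2 = \alpha + 1$, to put the two answers in the stated closed forms.

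I do not expect a serious obstacle: the argument is a monotonicity observation followed by a routine asymptotic estimate. The one point requiring genuine care is the reduction in the second step, where one must verify that the block-endpoint values actually realize the global liminf and limsup, rather than being merely candidate extrema on individual blocks; this is where the two inequalities above must be argued cleanly. A secondary bookkeeping concern is tracking both parities so that the additive $+1$ appearing in the odd case is correctly seen to be asymptotically negligible.
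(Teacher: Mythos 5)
Your proposal is correct and follows essentially the same route as the paper: it exploits the fact that Theorem~\ref{main} makes $f_{\bf ftm}(n)$ constant on each block $L_i < n \leq L_{i+1}$, so the quotient $f_{\bf ftm}(n)/n$ is extremized at the block endpoints, and then passes to the limit via the Binet forms $F_i = (\alpha^i - \beta^i)/\sqrt{5}$ and $L_i = \alpha^i + \beta^i$. The only difference is one of presentation: you spell out the reduction of $\liminf$/$\limsup$ to the limits of the per-block extrema (and the negligibility of the $+1$ in the odd case), which the paper leaves implicit.
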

These results refute Conjecture 3.1 of
\cite{Jamet&Popoli&Stoll:2021}.
The behavior of $f_{\bf ftm} (n)/n$ is depicted in
Figure~\ref{figgy} below.
\begin{figure}[htb]
\begin{center}
\includegraphics[width=5in]{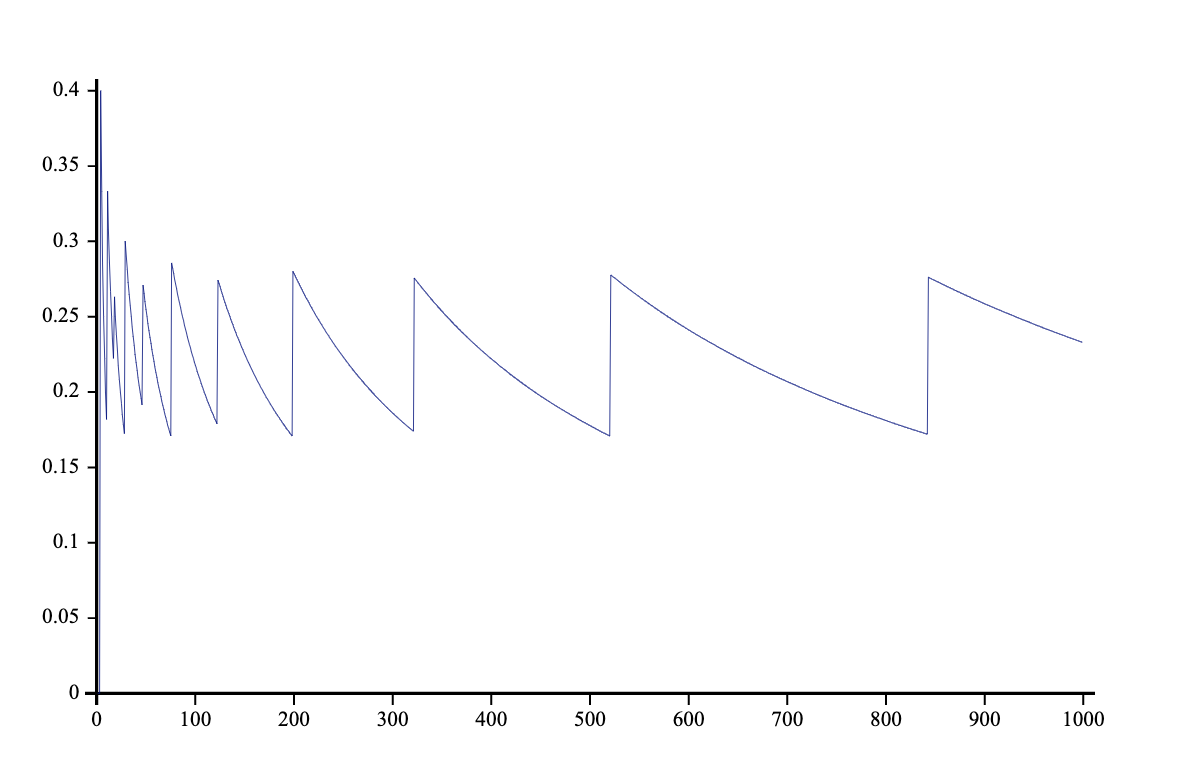}
\end{center}
\caption{The quotient $f_{\bf ftm} (n)/n$ for $2 \leq n \leq 1000$.}
\label{figgy}
\end{figure}

We can also recover Theorem 1 (that is, Theorem 1.1 of Jamet et al.~\cite{Jamet&Popoli&Stoll:2021}), with an explicit constant:
\begin{theorem}
We have $f_{\bf ftm} (n) \geq {n \over {\alpha+\alpha^3}}$ for
all $n \geq 5$.
\end{theorem}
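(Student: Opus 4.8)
The plan is to reduce the claimed bound, via Theorem~\ref{main}, to a finite check at the right-hand endpoints of the intervals on which $f_{\bf ftm}$ is constant, and then to close the estimates with the Binet forms. First I would dispose of the short range $5 \le n \le 7$ by hand: reading $f_{\bf ftm}(5)=f_{\bf ftm}(6)=f_{\bf ftm}(7)=2$ from the table, the inequality $2 \ge n/(\alpha+\alpha^3)$ is clear, since $\alpha+\alpha^3 \doteq 5.854$.

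For $n \ge 8$, each such $n$ lies in a unique interval $L_i < n \le L_{i+1}$ with $i \ge 4$, and on that interval $f_{\bf ftm}$ is constant by Theorem~\ref{main}. Hence $f_{\bf ftm}(n)/n$ is smallest at the right endpoint $n = L_{i+1}$ (exactly the observation already recorded just before the Corollary), so it suffices to prove
\[ \frac{f_{\bf ftm}(L_{i+1})}{L_{i+1}} \;\ge\; \frac{1}{\alpha+\alpha^3} \;=\; \frac{1}{\sqrt5\,\alpha^2} \qquad (i \ge 4). \]
To evaluate the endpoints I would substitute the Binet forms together with the identity $\beta = -1/\alpha$, so that $\beta^{k} = (-1)^k \alpha^{-k}$. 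When $i$ is even, both $\beta^{i-1}$ and $\beta^{i+1}$ are negative, whence $F_{i-1} > \alpha^{i-1}/\sqrt5$ and $L_{i+1} < \alpha^{i+1}$; dividing gives $F_{i-1}/L_{i+1} > 1/(\sqrt5\,\alpha^2)$ with strict inequality, so this case is immediate.

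The main obstacle is the odd case. Here the coarse estimates point the wrong way: $F_{i-1} < \alpha^{i-1}/\sqrt5$ and $L_{i+1} > \alpha^{i+1}$, so the bare quotient $F_{i-1}/L_{i+1}$ falls strictly below the target, and the proof must genuinely use the additive $+1$ in the odd branch of Theorem~\ref{main}. Clearing denominators and using $\beta^{i-1}=\alpha^{-(i-1)}$ and $\beta^{i+1}=\alpha^{-(i+1)}$, the desired inequality $(F_{i-1}+1)/L_{i+1} \ge 1/(\sqrt5\,\alpha^2)$ collapses to
\[ \sqrt5\,\alpha^2 \;\ge\; \alpha^{3-i} + \alpha^{-(i+1)} \;=\; \alpha^{-i}\bigl(\alpha^3 + \alpha^{-1}\bigr), \]
whose right-hand side decreases in $i$ and is already well below $\sqrt5\,\alpha^2 \doteq 5.854$ at $i=5$ (it is under $0.45$ there).

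The delicate point is precisely that this compensation must be verified for every odd $i \ge 5$, not merely in the limit; this is where the additive constant earns its keep, and it is the only place the argument is not a one-line consequence of the sign of the $\beta$-powers. Granting it, the displayed bound holds at every right endpoint, hence on every interval, and so $f_{\bf ftm}(n) \ge n/(\alpha+\alpha^3)$ for all $n \ge 5$, as claimed.
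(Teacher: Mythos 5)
Your proposal is correct and follows essentially the same route as the paper: dispose of $5 \le n \le 7$ by direct inspection, then for $n \ge 8$ use Theorem~\ref{main} to reduce the claim to the endpoint quotients $F_{i-1}/L_{i+1}$ (even $i$) and $(F_{i-1}+1)/L_{i+1}$ (odd $i$), which are bounded below by $1/(\alpha+\alpha^3)$ via the Binet forms. The only difference is in the low-level algebra --- you substitute $\beta = -1/\alpha$ and check a single monotone inequality numerically at $i=5$, while the paper manipulates the $\beta$-powers directly without numerics --- but the substance (favorable signs when $i$ is even, the additive $+1$ carrying the odd case) is the same.
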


\begin{proof}
We can verify the inequality for $n \geq 8$ using Theorem~\ref{main}.
Suppose $L_i < n \leq L_{i+1}$.

For $i\geq 4$ and even we have
$\beta^{i-2} + \beta^{i-4} \geq 0$, while $\beta^{i+1} < 0$.  
Hence
\begin{equation}
\beta^{i-2} + \beta^{i-4} \geq \sqrt{5} \cdot \beta^{i+1}.
\label{eq1}
\end{equation}
Now observe that $\sqrt{5} = \alpha + 1/\alpha$, so
adding $\alpha^{i+1} (\alpha + 1/\alpha)$ to both sides of
Eq.~\eqref{eq1}
gives
\begin{align*}
(\alpha^{i-1} - \beta^{i-1}) (\alpha+\alpha^3) &=
 \alpha^{i+2} + \alpha^i + \beta^{i-2} + \beta^{i-4}  \\
 & \geq \sqrt{5} (\alpha^{i+1} + \beta^{i+1}) .
\end{align*}
By rearranging we get
$$ {{f_{\bf ftm} (n)} \over n}  \geq  {{F_{i-1}} \over {L_{i+1}}}  = 
{{ (\alpha^{i-1} - \beta^{i-1} )/ \sqrt{5}} \over {\alpha^{i+1} + \beta^{i+1}}}
\geq {1 \over {\alpha+\alpha^3}} .$$

On the other hand, if $i\geq 5$ is odd, we have
$\sqrt{5} - \beta^{i-1} > 1$, and $\beta^{i+1} < 1$,
so
$$(\sqrt{5} - \beta^{i-1}) (\alpha + \alpha^3) \geq (\alpha + \alpha^3) \geq
\sqrt{5} \beta^{i+1}.$$
as before, adding $\alpha^{i+1} (\alpha + 1/\alpha)$  to both sides
and simplifying gives
$$ (\alpha^{i-1} - \beta^{i-1} + \sqrt{5}) (\alpha+\alpha^3) \geq 
\sqrt{5} (\alpha^{i+1} + \beta^{i+1}) ,$$
so
$$ {{f_{\bf ftm} (n)} \over n}  \geq  {{F_{i-1} + 1} \over {L_{i+1}}}
= {{ (\alpha^{i-1} - \beta^{i-1})/ \sqrt{5} + 1} \over 
{\alpha^{i+1} + \beta^{i+1}}} \geq {1 \over {\alpha+\alpha^3}} ,$$
as desired.

Finally, for $5 \leq n \leq 7$, we can easily check the inequality,
while for $n \leq 4$ it fails.  
\end{proof}

\end{document}